\newcommand{\ignore}[1]{}
\newcommand{\remove}[1]{}
\newtheorem{criterion}{Criterion}
\newtheorem{theorem}{Theorem}[section]
\newtheorem{lemma}[theorem]{Lemma}
\begin{document}

\title{
A Wait-free Multi-word Atomic (1,N) Register for Large-scale Data Sharing on Multi-core Machines}

\author{
Mauro Ianni (\texttt{mianni@dis.uniroma1.it}) \\
\multicolumn{1}{p{.8\textwidth}}{\centering\emph{DIAG - Sapienza University of Rome}} \\
\and
Alessandro Pellegrini (\texttt{pellegrini@dis.uniroma1.it}) \\
\multicolumn{1}{p{.8\textwidth}}{\centering\emph{DIAG - Sapienza University of Rome}} \\
\and
Francesco Quaglia (\texttt{francesco.quaglia@uniroma2.it}) \\
\multicolumn{1}{p{.8\textwidth}}{\centering\emph{DICII - University of Rome Tor Vergata}} \\
}


\maketitle

\begin{abstract}
We present a multi-word atomic (1,N) register for multi-core machines exploiting Read-Modify-Write (RMW) instructions to coordinate the writer and the readers in a wait-free manner. Our proposal, called Anonymous Readers Counting (ARC), enables large-scale data sharing by admitting up to $2^{32}-2$ concurrent readers on off-the-shelf 64-bits machines, as opposed to the most advanced RMW-based approach which is limited to 58 readers. Further, ARC avoids multiple copies of the register content when accessing it---this affects classical register's algorithms based on atomic read/write operations on single words. Thus it allows for higher scalability with respect to the register size. Moreover, ARC explicitly reduces
improves performance via a proper limitation of RMW instructions in case of read operations, and by supporting constant time for read operations and amortized constant time for write operations.
A proof of correctness of our register algorithm is also provided, together with experimental data for a comparison with literature proposals. Beyond assessing ARC on physical platforms, we carry out as well an experimentation on virtualized infrastructures, which shows the resilience of wait-free synchronization as provided by ARC with respect to CPU-steal times, proper of more modern paradigms such as cloud computing.
\end{abstract}

\section{Introduction}

Hardware-based atomicity facilities offered by multi-core computing platforms to manage single-word shared-objects are not sufficient to automatically guarantee atomicity when concurrent threads manipulate
multi-word objects. Synchronization algorithms are therefore needed to enable atomic read/write operations on this type of objects.
Also, the extreme level of scale-up of modern computing platforms, with projection towards exascale computing, demands for shared-object management algorithms that are capable of efficiently supporting huge levels of concurrency.

In this article we face such an issue by providing a pragmatic design and implementation of a shared-object algorithm in multi-processor/multi-core shared-memory machines. Specifically, we present Anonymous Readers Counting (ARC), which is an atomic (1,N)---one writer, N readers---register of arbitrary length (i.e., made up by an arbitrary number of words, which can change over time, possibly upon each update of the register) exhibiting the following features:
\begin{itemize}[noitemsep]
\item it is devised for a huge scale-up of the number of concurrent threads to be managed;
\item it targets the optimization of the actual execution path of the threads along multiple dimensions: locality, time complexity and actual cost of machine instructions to be executed.
\end{itemize}

We emphasize that providing optimized (1,N) registers is a relevant objective since they constitute building blocks to realize more general (M,N) registers, as already shown by several works (see, e.g.,~\cite{Lix96}).

As the core property enabling scalability, ARC guarantees {\em wait-freedom} of both write and read operations. In fact, it uses no locking scheme, and guarantees that no operation (either a read or a write) fails and no retry-cycles are ever needed. This is achieved by relying on Read-Modify-Write (RMW) atomic instructions offered by conventional Instruction Set Architectures (ISAs), which are exploited to manipulate meta-data that are used by concurrent threads to coordinate themselves when performing register operations.

A close literature proposal based on RMW instructions, which still guarantees wait-freedom of read/write operations on (1,N) registers, is the one in~\cite{Lar09}.
However, this proposal allows up to 58 readers only on conventional 64-bit machines, while ARC can manage up to $2^{32}-2$ readers, thus enabling a huge scale-up in the level of concurrency, as already hinted. Also, the approach in~\cite{Lar09} is based on deterministically forcing synchronization (via RMW instructions) upon any read operation, even in scenarios where the register's content has not been modified by the writer since the last read by the reader. ARC avoids executing RMW instructions in such situations, since it detects whether the last accessed snapshot of the register is still consistent (it is the most up to date one within the linearizable history of read/write accesses) by only relying on conventional memory-read machine instructions.

The effect of this optimization on the overall performance 
is non-minimal, as we also show experimentally, given that RMW instructions pay anyhow a cost
due to the effects on the interconnection among CPU-cores. For example, modern Intel-based architectures relying the QuickPath Interconnect~\cite{Zia10} require (and pay the cost of) message passing among CPU-cores when executing RMW instructions.
There are corner cases, like in the case of a split lock, where all cores might be required to enter a spin phase (via a message passing protocol) to ensure data coherency.
Furthermore, these effects can be amplified when a memory location updated by a RMW instruction is split across different cache lines, as shown in~\cite{Saf09}.

As opposed to more historical solutions for wait-free atomic (1,N) registers in shared-memory platforms~\cite{Lam77}, which only exploit atomic read/write operations (not RMW instructions) of individual memory words, we avoid multiple copies of the register content when performing either read or write operations. This allows for better scalability of ARC with respect to the size of the register content.

Still related to buffer management, ARC adheres to the classical lower bound of $N+2$ buffers~\cite{Lam86} keeping the different snapshots of the (1,N) register content, to be accessed in wait-free manner in some linearizable execution of read/write operations by the concurrent threads. Overall, compared to literature proposals, we enable definitely scaled up amounts of concurrent readers with no increased memory footprint and by not imposing extra memory-copy operations, which leads ARC to favor locality.

Furthermore, ARC allows constant time for read operations, jointly guaranteing amortized constant-time for write operations. This is not guaranteed by the RMW-based approach in~\cite{Lar09}, since it requires $O(N)$ time for write operations---an aspect that is clearly related to the reduced amount of readers admitted by this register algorithm.

Beyond presenting ARC, we also provide a proof of its correctness. Further, we report experimental data showing the benefits  from our proposal compared to a few existing literature solutions. Performance data have been collected on a physical 32 CPU-cores machine and on a virtual platform hosted by Amazon equipped with 40 vCPUs. This allows us to asses as well the resilience of our wait-free register algorithm on top of architectures which could suffer much more by lock-based or non-efficient non-blocking synchronization strategies, due to the fact that a delay imposed on a certain processor by physical unavailability of processing resources could in its turn affect the speed of all other processors, independently of the assigned computing power. As a last note, our experimental evaluation has been based on user-space code implementing ARC, but nothing prevents ARC to be integrated within lower-level software layers (such as an operating system kernel).

The remainder of this article is organized as follows. In Section \ref{sec:related} we discuss related work. ARC is presented in Section \ref{sec:technical}. Its correctness proof is provided in Section \ref{sec:proof}. Experimental results are reported in Section \ref{sec:experimental}.

\section{Related Work}
\label{sec:related}

Our target are single-writer/multiple-readers shared-objects in multi-processor/multi-core machines, to be managed in a wait-free manner.  According to~\cite{Her91}, wait-freedom allows any concurrent operation on the shared-object to execute in a finite number of steps, regardless of any action carried out by other concurrent operations. This is not guaranteed neither by classical lock-based synchronization schemes~\cite{Sil94} nor by lock-free ones~\cite{Bar93,Her90}. Wait-freedom appears as a mandatory means to efficiently handle concurrent operations on shared-objects in systems with large/huge amounts of concurrent threads. In fact, it enables performance to be resilient to degradation with respect to the level of concurrency in the operations on the shared-object.

A (1,N) register algorithm for multi-processors has been provided by Lamport in~\cite{Lam77}. This solution enables
wait-free writes, but only guarantees lock-free read operations, since the writer
can force slow-running readers to retry their read operations indefinitely. A fully wait-free solution has been presented by
Peterson~\cite{Pet83}, which marked the begin of a long running research path towards the construction of wait-free solutions to the readers/writers problem. Along this path we find proposals dealing with (1,1)~\cite{Lam86,Sim90}, (1,N)~\cite{Hal95,Pet83}, and (M,N) registers~\cite{Pet83,Vit86}. A common aspect that characterizes these proposals is that they build wait-free multi-word registers by relying only on single-word read/write registers, just based on atomic single-word read/write instructions. Thus they do not exploit synchronization facilities offered by conventional multi-processor/multi-core machines, such as RMW instructions like Compare-and-Swap (CAS). The disadvantage lies in that, in order to assess the validity of a multi-word atomic read/write operation, it must be carried out multiple times (e.g., 2 times in~\cite{Pet83}), which may impair performance
especially when scaling up the size of the register. In our approach we avoid this drawback by avoiding at all multiple copies of the register content upon both read and write operations. In particular, we support write operations with a single copy of the new register content into the target buffer. Also, read operations do not need any intermediate data copy, since the reading process can directly read data from the buffer originally targeted by the write operation that is serialized before the read itself. Hence, in ARC, accessing the register in read mode only entails retrieving the correct buffer address.

Several proposals~\cite{And95,Fat11,Agh14,Zhu15} allow to realize a wait-free register by relying on a wait-free universal construct~\cite{Her88}.
This is a design choice that we have explicitly avoided, making our proposal mostly orthogonal. In fact, the employment of a universal construct does not allow capturing the intrinsic properties of the different register operations (read vs write). In turn, this might reduce performance since the number of synchronization steps  might be much larger than what strictly required (just depending on the different nature of the operations). As an example, the work in~\cite{Zhu15} realizes a read operation as a generic one, making it at least as heavyweight as a write operation, while in ARC we have explicitly differentiated the implementations of read and write operations, so as to jointly optimize their execution path.
Moreover, a number of synchronization steps not adhering to the minimum required might have a negative impact on
scalability
of the overall algorithm also because of the effects on the underlying memory hierarchy, which might be further amplified by cache-unaligned data structures.

An additional difference between ARC and the work in~\cite{Zhu15} lies in the fact that the space requirement of the latter for the wait-free implementation is $O(N^2)$ buffers, while we stick to the traditional lower bound of $N+2$ buffers. A quadratic memory cost is also paid by the proposal in~\cite{Agh14}, which additionally has an $O(N)$
time due to the reliance on hazard pointers. We pay such a linear cost only in some corner cases of a write operation, since ARC provides constant-time for reads and amortize constant-time for writes.

Among the aforementioned works exploiting the concept of universal constructor,~\cite{Fat11} is the only one using RMW instructions. Nevertheless, wait-freedom is guaranteed by having all threads register the operation that they want to do---either a read or a write operation---in a shared buffer. Then, all the threads attempt at the same time to complete all the registered operations, ensuring that only one of them actually succeeds. This implies a total of $O(N^2)$ attempts to carry out $N$ operations concurrently executed by the threads. On the other hand, we keep the wait-free nature of the algorithm, while avoiding to have multiple threads carry out same operations.

To the best of our knowledge, the only other proposal based on RMW instructions offered by the underlying computing platform to support an atomic wait-free (1,N) register is the one in~\cite{Lar09}. Here the authors use 64-bit atomic memory operations to update/retrieve a bit-mask indicating what is the buffer instance containing the updated version of the register content and which threads are reading this content version.
The overall number of slots to be managed is $N+2$, as a classical minimum requirement for a wait-free (1,N) register. Hence, by partitioning the 64-bit mask into the two aforementioned portions (one for the buffer instance and the other for standing reads identification), the maximum number of admitted concurrent readers is 58. Compared to this approach, we use RMW instructions on 64-bit words in a completely different manner, since we do not associate individual bits with threads (to indicate whether a given thread has a standing read on a given buffer instance). Rather, we adopt an anonymous scheme where registering a thread as a reader of a given buffer instance only entails incrementing a per-instance counter of standing reads (hence the name ARC for our proposal).
As a consequence, we can host up to $2^{32}-2$ concurrent readers, which is done by still relying on $N+2$ buffers to keep the register content. Overall, compared to the work in~\cite{Lar09}, our proposal handles scenarios with a large/huge increase of the amount of threads allowed to concurrently perform read operations. Hence, we enable scaled-up wait-free concurrency on the atomic (1,N) register up to a level  fitting the requirements of massively parallel applications hosted by huge (virtualized) hardware parallel platforms. Also, as hinted in the introduction section, the actual number of RMW instructions executed in our register algorithm under diverse workloads is typically lower than the one of~\cite{Lar09}. As we will show via experimental data, this leads to a reduced impact of platform-level synchronization on performance by our proposal.

\section{Anonymous Readers Counting}
\label{sec:technical}

\subsection{Basics}
A \textit{multi-word shared register} is an abstract data structure that is shared by a number of concurrent processes\footnote{From now on we use the term `process' and `thread' interchangeably since the classical literature on register algorithms uses the term `process' to indicate the active entity that can operate on the register.}~\cite{Lam77,Pet83}.
Each process is allowed to perform two operations on the register: a \textit{read}, which retrieves the most up-to-date value kept by the register, and a \textit{write}, which stores a new register's value. We consider \textit{asynchronous} processes, meaning that no assumption is made on their relative speed or on the interleave of their operations. The operations by a same process are assumed to execute sequentially.
The weakest class to which a register can belong is that of \textit{safe} registers~\cite{Lam86}. A register is safe if its correct value can be always retrieved only in case no concurrency is allowed among reads and writes. Considering that we target concurrent objects, we consider a stronger class, namely \textit{regular} registers.

Regular registers are defined in terms of possible \textit{execution histories} of concurrent read/write operations. In particular, each operation $O$ on the register has a wall-clock time duration, which can be denoted as $[O_s, O_e]$ where $O_s$ and $O_e$ are the starting and ending instants, respectively. A regular register is one that is safe, and in which a read operation that overlaps (in time) a series of write operations obtains either  the register value before the first of these writes or one of the values being written~\cite{Lam86}. Introducing a reading function $\pi$ to assign a write $w$ on the register to each read $r$ such that the value returned by $r$ is the value written by $w$,
and defining a precedence relation on the operations leading to a strict partial order `$\to$'~\cite{Lam86},
a regular register always respects the following property:
\begin{itemize}[noitemsep]
\item \textbf{No-past}. There exists no read r and write w such that $\pi$(r) $\to$ w $\to$ r.
\end{itemize}

Anyhow, a regular register does not ensure that multiple reads executed concurrently to a write must ``agree'' to the same value. By the \textit{linearizability} property~\cite{Her87,Her90}, we can always find a \textit{linearization point} which provides the illusion that each operation $O$ takes effect instantaneously at some point between $O_s$ and $O_e$. Consequently, a stronger class of registers is that of \textit{atomic} registers, defined according to the following criterion~\cite{Lar09}:
\begin{criterion}
A shared register is atomic iff it is regular and the following condition holds for all possible executions:

\begin{itemize}[noitemsep]
\item \textbf{No New-Old inversion}. There exist no reads r1 and r2 such that r1 $\to$ r2 and $\pi$(r2) $\to$ $\pi$(r1).
\end{itemize}
\end{criterion}

With an atomic register, reads can be separated among those ``happening'' before and after the linearization point of some write. This categorization marks the difference among the concurrent reads that can return the old value and those which need to return the new value. In particular, if two reads from the same process overlap a write then the later read cannot return the old value if the earlier read returns the new one. Atomic registers have been shown to be linearizable~\cite{Her08}.

\subsection{Memory consistency model}

Multi-processor/multi-core shared memory systems, which we target in this article, offer \textit{memory consistency models}~\cite{Sor11} as kind of ``contracts'' among software developers and hardware manufacturers. They discriminate what software can expect to be guaranteed by the underlying hardware.
A variety of consistency models exist, which are often presented as a set of rules. The simplest memory consistency model is \textit{sequential consistency}. In this model \enquote{the results of any execution is the same as if the operations of all the processors were executed in some sequential order, and the operations of each individual processor appear in this sequence in the order specified by its program}~\cite{Lam79}. This model ensures that all read and write instructions executed by any processor are observed in the same order by all the processors in the system. Peterson's algorithm~\cite{Pet83} and several lock-based algorithms~\cite{Bar93,Her90} require sequential consistency to prove correct.

We assume a weaker consistency model, namely \textit{Total Store Order} (TSO)~\cite{Sor11}, which is used by most off-the-shelf platforms, such as SPARC and x86, thus making our solution of general practical applicability. With TSO,  CPU-cores usually use \textit{store buffers} to hold the stores committed by the overlying pipeline until the underlying memory hierarchy is able to process them. In particular, a store leaves the buffer whenever the cache line to be written is in a coherence state such that the update can be safely performed. TSO allows what is called a \textit{store bypass}: even if a CPU-core outputs a write before a read, their order on memory (as seen by other CPU-cores) can be reversed.

While TSO produces no damage in many applications (rather, it can provide a significant speedup due to a reduced latency on the memory hierarchy), synchronization based on shared memory data must explicitly cope with this scenario.
In fact, store bypasses can affect the correctness of synchronization algorithms (e.g. register algorithms) for concurrent processes only relying on individual read/write operations (just like \cite{Pet83}). On the other hand, TSO-based architectures offer particular instructions, referred to as \textit{memory fences}, which enable recovering sequential consistency by explicitly flushing store buffers before executing any other memory operation, thus allowing to preserve the ordering across subsequent read\slash write operations.

Still, for scenarios where synchronization among processes requires to atomically perform pairs (or more) operations, memory fences do not suffice.
To cope with this issue, TSO-based conventional architectures offer \textit{Read-Modify-Write} (RMW) instructions,
whose execution directly interacts with cache controllers so as to ensure that cache lines keeping \textit{synchronization variables} are held in an exclusive state until a couple of read\slash write operations are executed atomically~\cite{Sor11}. This means that no other cache can keep the same line in read mode until the couple of operations completes.
Classical RMW instructions supported by off-the-shelf processors are: \textit{load-link\slash store-conditional} or \textit{compare and swap}, which (although in a different manner) update a memory location only if its content was not changed since the last read access;
\textit{atomic exchange}, which atomically reads the content of a memory location and updates its value;
\textit{add and fetch}, which increments a memory location and reads the updated value.

\subsection{The Register Algorithm}
Similarly to most wait-free (1,N) registers, in our algorithm we use $N+2$ buffers to keep different snapshots of the register value, as produced along time by write operations. This allows each reader to keep a buffer for reading (possibly different across the $N$ readers), while at least 2 buffers are still available to keep some up-to-date register value (the one written while  the readers were concurrently reading the register) and the work-in-progress copy being produced by the writer, if any.

The core data structure we exploit is a single-word shared synchronization variable called {\tt current}. It is a 64-bit shared variable divided into two fields: {\sf index}, keeping the index of the slot containing the most up-to-date register value, and {\sf counter}, namely the readers' presence counter (the number of standing concurrent reads on the slot targeted by {\tt index}). {\sf index} is 32 bits wide, so up to $2^{32}-2$ concurrent readers are allowed%
\footnote{We have selected 32 as a meaningful value for common off-the-shelf architectures which use 64-bit words and RMW instructions targeting 64-bit memory locations. In different architectures, this could be set to an even larger value, by simply having the {\tt current} variable enlarged in size, depending on the actual size of memory locations targeted by RMW instructions.}.

Additionally, our register data structure is made up by $N+2$  slots forming an array which we refer to as {\tt register[]}.
Each slot of this array is an instance of a data structure containing the following fields:

\begin{labeling}{contenttt}
	\item[{\tt r\_start}] The number of read operations started on the slot since its last update.	
	\item[{\tt r\_end}] The number of read operations completed on the slot since its last update.
	\item[{\tt size}] The size of the register value stored in the slot.
	\item[{\tt content}] A pointer to the memory location (the buffer) where the register content is stored.
\end{labeling}

The {\tt size} field is introduced since we support writes (hence reads) of different sizes, meaning that each register value can have a different size (clearly up to some admissible maximum). Also, with no loss of generality, while presenting the register pseudo-code we assume that the buffer pointed by the {\tt content} field of the register slot is already allocated, and that it can host the maximum sized register content (depending on the usage scenario). In any real implementation of our register algorithm, dynamic buffer allocation/release, with each buffer made up by the amount of bytes fitting the size of the register value to be stored upon write operations could be employed.

\begin{algorithm}[!t]
\caption{Register initialization.}
\label{alg:init}
\begin{algorithmic}[1]
{\small
\Procedure{INIT}{$content$, $size$}
\ForAll {$slot \in [0,N+1]$}
	\State $register[slot].size \gets 0$
	\State $register[slot].r\_start \gets 0$
	\State $register[slot].r\_end \gets 0$
\EndFor
\State \Call{MemCopy}{$register[0].content,content,size}$
\State $register[0].size \gets size$
\State $current \gets N$\label{alg:init:virtual} \Comment{\textbf{I1}}
\EndProcedure
}
\end{algorithmic}
\end{algorithm}

The initial setup of the register data structure is shown in Algorithm~\ref{alg:init}.
With no loss of generality, we assume that the register is initialized to keep its initial value into {\tt register[0]}, and that all the other $N+1$ entries are all available for posting some new register value.

Algorithm~\ref{alg:read} shows the pseudo-code for the read operation. By exploiting the {\tt AtomicAddAndFetch}  instruction targeting {\tt current}, a reader process is able to atomically retrieve the index of the slot containing the most up-to-date register value and increment the corresponding presence counter (\textbf{R4}).
This allows us to enforce \textit{visible reads}~\cite{Bur80}, although we do this in an \textit{anonymous way}. In fact, the presence counter is not used to indicate who has started reading the up-to-date register value, rather how many processes did it. The index of the slot from which the up-to-date value is to be found is extracted by executing bitwise instructions on the value returned by {\tt AtomicAddAndFetch}.

\begin{algorithm}[!t]
\caption{The atomic register read operation.}
\label{alg:read}
\begin{algorithmic}[1]
{\small
\Procedure{Read}{\,}
	\State $index \gets current \gg$ 32 \Comment{\textbf{R1}}
	\If{$last\_index = index$}
		\State $entry \gets register[last\_index]$
		 \State \Return{$\langle entry.content, entry.size \rangle$} \Comment{\textbf{R2}}
	\EndIf
	\State \Call{AtomicInc}{$register[last\_index].r\_end$} \Comment{\textbf{R3}}
	\State $tmp\_curr \gets $ \Call{AtomicAddAndFetch} {$current, 1$} \Comment{\textbf{R4}}
	\State $last\_index \gets $ $tmp\_curr \gg 32$ \Comment{\textbf{R5}}
	\State $entry \gets register[last\_index]$
	\State \Return{$\langle entry.content, entry.size \rangle$}
\EndProcedure
}
\end{algorithmic}
\end{algorithm}

We consider a read operation from a slot as concluded as soon as the reader tries to read again from the register. When, this happens, the {\tt r\_end} counter of the slot from which the reader took the register value upon its last read is incremented atomically. A special case occurs when the already-read slot still keeps the most up-to-date register value (\textbf{R2}). In this case, {\tt r\_end} is not incremented to indicate that the reader did not conclude its operations on the slot yet---a new read is just starting, bound to that same slot. Incrementing {\tt r\_end} only when moving to another slot (upon a subsequent read that finds a newer register value) allows us to avoid overflows of counter variables (\textbf{R3}). Thus we enable an infinite number of reads (by any reader) to occur on a slot that still keeps the up-to-date register value. In order to remember from which slot the reader took the register value upon its last read we use the {\tt last\_index} variable (which is local to a reader), where we load the index of the target slot for the read operation each time the reader accesses a newer register value (\textbf{R5}). The check on whether the last accessed register value is still the most up-to-date is executed by loading the index kept by {\tt current} (\textbf{R1}) as soon as the read operation starts, and then comparing it with {\tt last\_index}. Given that the value of {\tt current} is manipulated by any process (including the writer, as we will show) via RMW instructions only, then the index value returned by reading {\tt current} (\textbf{R1}) is guaranteed to represent a correct snapshot of the shared synchronization variable we use in our register algorithm under the assumed TSO memory consistency model.

As startup {\tt current} is initialized to $N$ (\textbf{I1}). This sets its most-significant 32 bits (the {\sf index} sub-field) to zero and initializes the {\sf counter} sub-field as if all the readers had already started reading from the 0-th (initially-valid) slot. Therefore, if no update is ever made on the register's content, readers will indefinitely read this value (\textbf{R1}).

\begin{algorithm}[t]
\caption{The atomic register write operation.}
\label{alg:write}
\begin{algorithmic}[1]
{\small
\Procedure{Write}{$content$, $size$}
	\State pick $slot$ such that $slot\neq last\_slot \wedge register[slot].r\_start = register[slot].r\_end$ \Comment{\textbf{W1}}
	\State \Call{MemCopy}{$register[slot].content,content,size$}
	\State $register[slot].size \gets size$
	\State $register[slot].r\_start \gets 0$
	\State $register[slot].r\_end \gets 0$
	\State $old\_curr \gets $ \Call{AtomicExchnge} {$current$, $slot \ll$ 32} \Comment{\textbf{W2}}
	\State $old\_slot \gets old\_curr$ $\gg$ 32
	\State $register[old\_slot].r\_start \gets old\_curr$ $\&~(2^{32}-1)$ \Comment{\textbf{W3}}
	\State $last\_slot \gets slot$
\EndProcedure
}
\end{algorithmic}
\end{algorithm}

The pseudo-code for the write operation is shown in Algorithm~\ref{alg:write}.
Upon writing, the writer process selects a free slot, namely a slot which is not currently bound to any not yet finalized read operation by whichever process, and which is different from the slot that was used for the last write operation (say the one kept by {\tt current}). In compliance with the initialization of the register, we assume that the {\tt last\_slot} local variable kept by the writer, indicating the last slot used for a write, is initialized to the value 0. In fact, at init-time the initial register content is posted onto the 0-th slot.  The writer detects if no more processes are currently reading from a slot by checking whether the two counters {\tt r\_start} and {\tt r\_end} associated with the slot keep the same value. The writer then performs a copy operation of the new value to the selected slot, and updates all the fields of the slot entry. In particular, it sets both {\tt r\_start} and {\tt r\_end} to zero, and {\tt size} to the actual size of the new register value that is being stored. Then, by using an {\tt AtomicExchange} instruction (\textbf{W2}), the writer changes the content of the {\tt current} shared synchronization variable so as to ``publish'' the index of the new slot from which readers can start performing read operations. Given that the update of {\tt current} is based on the execution of an RMW instruction, the content of the slot selected for the new write operation is guaranteed to be coherent when the {\tt current} variable is updated under the assumed TSO memory consistency model. In other words, if a reader gets the updated {\tt current} value (\textbf{R4}) and accesses the target slot, the accessed data are guaranteed to be coherent with the corresponding updates performed by the writer.

The new value of {\tt current} which is atomically written by the writer (\textbf{W2}) has a {\sf counter} field set to zero, telling that the new version has not been read by any process yet.
The {\tt AtomicExchange} allows to retrieve as well the old value of {\tt current}, which is loaded into the {\tt old\_current} variable local to the writer. This is used by the writer to extract the old {\sf counter} field, and store its value in the {\tt r\_start} field of the old (the last written) slot (\textbf{W3}). In this way, the number (not the identity) of readers which started an operation on the old slot is ``freezed'' into the slot management meta-data.
We note that, after such freezing takes place for some slot, the corresponding  values {\tt r\_start} and {\tt r\_end} are such that {\tt r\_start} $\geq$ {\tt r\_end}. But eventually these two values will be the same, which is the condition telling the writer that the slot has been released by all the readers since they moved to some fresher slot. In fact, the condition {\tt r\_start} $=$ {\tt r\_end} indicates to the writer that the slot is free again (\textbf{W1}). On the other hand, any written slot that is never accessed by any reader up to the point in time  where some newer register value is atomically published by the writer, will have its  {\tt r\_start} and {\tt r\_end} fields both set to zero, which implies it is a free slot.

\subsection{Speeding up free-slot searches}

By the pseudo-code of ARC read operations can be trivially shown to take constant-time. On the other hand, write operations require searching for a free slot among $N+2$ (\textbf{W1}), which would imply linear time complexity. To provide amortized constant time for write operations (in particular for the slot search operation),
readers that complete their read from a slot by incrementing the corresponding {\tt r\_end} counter (i.e. they release the slot), can check whether this counter is equal to the {\tt r\_start} counter associated with the same slot. If this is true, then by the register algorithm structure it means that the slot can be reused for subsequent writes. Hence, a reader detecting such an equality can post into another shared variable the index of the just-released slot. This can be used by the writer as a \textit{proposal} to start searching for a free slot. This proposal will always correspond to an actually free slot (hence enabling constant time retrieval of the free slot upon write operations) except for the corner case where the writer already took the same slot for some already issued write having observed its release before the reader posted its proposal.

\section{Correctness Proof}
\label{sec:proof}

By code construction, all invocations to {\sc Read}() are guaranteed to complete in a finite number of steps. Hence reads are by construction guaranteed to be wait-free. As for the  {\sc Write}() operation, completion within a finite number of steps is guaranteed if the free-slot search operation carried out at the beginning of the write operation completes in a finite number of steps. This is true if it is guaranteed that at least one slot different from the last one used for a register write is in a state such that its {\tt r\_start} and {\tt r\_end} fields are equal. This is proven in the following Lemma:

\begin{lemma}
\label{lemma:lemma0}
Upon starting a write operation at least one of the $N+2$ register slots, which is different from {\tt last\_slot}, is such that {\tt r\_start} and {\tt r\_end} keep the same value.
\end{lemma}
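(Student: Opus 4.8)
The plan is to argue by a counting/pigeonhole argument on how many slots can simultaneously be "blocked" (i.e. have `r_start` $\neq$ `r_end`) at the moment a write operation begins. There are $N+2$ slots total. One slot is `last_slot` and is excluded by fiat. So it suffices to show that among the remaining $N+1$ slots, at most $N$ can have `r_start` $\neq$ `r_end`, leaving at least one free slot. Equivalently, I want to exhibit an injection from the set of "blocked" slots (other than `last_slot`) into the set of the $N$ reader processes.

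First I would pin down precisely when a slot has `r_start` $>$ `r_end` (note by the remark after Algorithm 3 we always have `r_start` $\geq$ `r_end` once a slot has been "frozen", and both are $0$ for a never-accessed published slot, so the blocked condition is genuinely `r_start` $>$ `r_end`). A slot $s \neq$ `last_slot` can be blocked only if it was published by some earlier write, subsequently frozen by a later write (step W3 wrote the old counter value into `r_start`), and not all readers that incremented that counter have yet executed their matching `AtomicInc` on `r_end` (step R3). Crucially, each such "owed" increment of `r_end` corresponds to a distinct reader process that currently has `last_index` $= s$: when a reader did the `AtomicAddAndFetch` at R4 against `current` while it held index $s$, it set its local `last_index` to $s$ (R5), and it will not touch `r_end` of $s$ again until its next read moves it off $s$ (R3), at which point that reader's `last_index` changes. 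Hence the predicate "reader $p$ owes an `r_end` increment to slot $s$" implies `last_index`$_p = s$, and since `last_index` is a single local variable per reader, each reader owes to at most one slot.

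The key step is therefore: if slot $s \neq$ `last_slot` is blocked at the start of the write, then at least one reader process has `last_index` $= s$ and has an outstanding (not-yet-performed) R3 increment for $s$. Given that, the map $s \mapsto$ (some such reader) is well-defined and injective (distinct blocked slots get distinct readers because a reader's `last_index` has one value), so the number of blocked slots among the $N+1$ non-`last_slot` slots is at most $N$, and at least one slot is free. I would carry this out by: (1) fixing the instant $t$ at which W1 is evaluated; (2) for a blocked $s$, using `r_start`$(s) >$ `r_end`$(s)$ at $t$ together with how W3/R3 are the only writers of these two counters to conclude there is a reader whose most recent completed read was from $s$ and who has not yet re-read; (3) observing that such a reader has `last_index` $= s$ and is "assigned" uniquely to $s$; (4) pigeonhole.

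The main obstacle I expect is step (2): making rigorous the bookkeeping that every unit of the gap `r_start`$(s) - $`r_end`$(s)$ is "charged" to a currently-active reader sitting on $s$, rather than, say, to a reader that has since crashed or moved on without paying. This needs the structural fact that R3 (the `AtomicInc` on the old slot's `r_end`) is executed by a reader \emph{before} it does the R4 `AtomicAddAndFetch` that would move it to a new slot — so a reader that has left $s$ has already paid — combined with the fact that the writer's W3 freeze records exactly the R4-increments that occurred while $s$ was current, i.e. exactly the readers that (at freeze time) had committed to $s$ via R5 but will only pay via R3 on their next read. Handling the interleavings here — a reader mid-`Read()` between R3 and R4, or between R4 and R5 — and the edge case of the never-accessed published slot (gap $= 0$, not blocked) is where the care is required; everything after that is a one-line pigeonhole.
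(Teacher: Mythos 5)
Your proposal is correct and is essentially the paper's own argument: both rest on the observation that a reader increments the presence counter on a new slot (R4) only after paying its \texttt{r\_end} debt to the old slot (R3), so each reader has at most one outstanding debt at a time, and hence at most $N$ of the $N+2$ slots can have \texttt{r\_start} $>$ \texttt{r\_end}. The paper phrases this as the bound $\sum_j (\texttt{r\_start}_j - \texttt{r\_end}_j) \leq N$ followed by pigeonhole, whereas you phrase it as an injection from blocked slots into readers via \texttt{last\_index}; these are the same counting argument in different clothing.
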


\begin{proof} This proof is based on two disjoint cases analysis:

{\bf Case 1}. The writer performs its first write on the register. In this case, all the {\tt r\_start} and {\tt r\_end} fields are still found to be set to the value 0. This is because no reader could have updated any {\tt r\_end} field in any slot since this can only happen if a newer register value is found upon a read operation, which is not the case since the writer did not yet post any new value, say {\tt current} has never been updated. Also, the writer did not yet update any {\tt r\_start} field in any slot, since this takes place just while performing a write operation, while we are at the beginning of the first write. Given that {\tt last\_slot} is set to 0 upon register initialization, all the $N+1$ slots that are different from the 0-th one are such that that their {\tt r\_start} and {\tt r\_end} fields are both set to zero. Hence at least one slot different from {\tt last\_slot} is such that its {\tt r\_start} and {\tt r\_end} fields keep the same value, and the claim follows.

{\bf Case 2}. The writer performs the $i$-th write on the register. In this case, all writes up to the ($i-1$)-th one have updated {\tt current}, and the readers might have fetched the various values of {\tt current}, also releasing a presence count unit each time this happened. By the {\sc Read}() operation pseudo-code, a reader leaves a presence count unit on some slot (updating the {\tt counter} field of the variable {\tt current}) only after having released a count unit on the {\tt r\_end} field of some other slot. Hence, for all the {\tt r\_start} units freezed by the writer into the slots upon performing writes up to the $(i-1)$-th we have that:
$$\sum_{j=0}^{N+1}{(register[j].r\_start - register[j].r\_end)}\leq N$$

Hence, given that {\tt r\_start} and {\tt r\_end} fields are non-negative values, for at least 2 different slots of the $N+2$ slots of the register, these same fields must have the same value.
Hence at least one slot which is different from {\tt last\_slot} is such that its {\tt r\_start} and {\tt r\_end} fields keep the same value, and the claim follows.
\end{proof}

We now prove consistency of concurrent read/write operations in our register:

\begin{lemma}
\label{lemma:lemma1}
While the writer is executing a write operation on a slot, no reader will read the same slot until the write completes.
\end{lemma}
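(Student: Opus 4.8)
The plan is to show that the writer's choice of slot in step \textbf{W1} guarantees that the selected slot cannot be concurrently targeted by any reader, and that this invariant is preserved until the write completes. First I would observe that a reader accesses a slot only as a consequence of reading the {\sf index} sub-field of {\tt current}: either directly, in step \textbf{R1}, when {\tt last\_index} already equals that index (the ``still up-to-date'' case \textbf{R2}), or after the {\tt AtomicAddAndFetch} in step \textbf{R4}, which both reads and increments {\tt current} atomically. Consequently, a reader ``binds'' to slot $s$ only if it observed {\tt current}'s index equal to $s$, and every such binding via \textbf{R4} increments the {\sf counter} sub-field associated with whatever index is currently published.

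The key step is to track the accounting of {\tt r\_start} versus {\tt r\_end} for the slot the writer picks. At the moment of step \textbf{W1} the writer picks $slot \ne last\_slot$ with $register[slot].r\_start = register[slot].r\_end$. I would argue that this slot's index is not the one currently stored in {\tt current}: if it were, then since the last write that published it (step \textbf{W2}), its {\tt r\_start} was set, in step \textbf{W3}, to the {\sf counter} value frozen at that time, and every reader that has since bound to it via \textbf{R4} added to {\tt current}'s counter without yet being reflected in {\tt r\_end}; more carefully, by the invariant used in Lemma~\ref{lemma:lemma0}, for the currently-published slot the quantity (current counter) $+$ {\tt r\_end} equals {\tt r\_start} plus the number of readers presently bound to it, so $register[slot].r\_start = register[slot].r\_end$ together with the slot being published forces the counter in {\tt current} to account exactly for the readers bound there, i.e. no fresh binding is possible without first changing the equality — the cleanest route is simply: the writer never picks the published slot because $last\_slot$ is the published slot (it was set in the previous write), and $slot \ne last\_slot$ by construction of \textbf{W1}. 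Hence at the instant of selection no reader can subsequently bind to $slot$ via \textbf{R1}/\textbf{R4}, because those bindings require {\tt current}'s index to equal $slot$, which holds only after the writer itself publishes it in \textbf{W2} — after the {\sc MemCopy} and field updates are done.

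Next I would handle readers already ``in flight'': a reader that executed \textbf{R1}/\textbf{R4} before the current write cannot have read index $= slot$, since $slot$ was not published at any point up to the writer's selection (it was last published no later than two writes ago, and by the invariant of Lemma~\ref{lemma:lemma0} any reader that had bound to it then has moved on — this is exactly what makes $register[slot].r\_start = register[slot].r\_end$ true). Therefore no reader holds a binding to $slot$ when the write begins, and no reader can acquire one until after \textbf{W2}, which marks the completion of the write's publication. The TSO argument already established in the text — that the {\sc MemCopy} and field writes preceding \textbf{W2} are globally visible once the RMW in \textbf{W2} takes effect — then ensures that the very first reader able to see index $= slot$ sees a fully coherent slot, but that reader by definition acts after the write. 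This closes the claim.

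The main obstacle I anticipate is making rigorous the ``no reader is currently bound to $slot$'' part: one must rule out a stale reader that, a long time ago, read index $= slot$ from a much earlier write, stalled before ever issuing its next {\sc Read}, and thus never incremented {\tt r\_end} for $slot$. Such a reader would keep $register[slot].r\_start > register[slot].r\_end$ — but that contradicts the selection condition in \textbf{W1}. So the argument must invoke the precise bookkeeping: the frozen {\tt r\_start} set in \textbf{W3} counts exactly the readers that had bound to the slot while it was published, and each of those readers, by the \textbf{R3}-before-\textbf{R4} ordering in {\sc Read}, increments {\tt r\_end} of that slot before binding anywhere else; hence $register[slot].r\_start = register[slot].r\_end$ really does certify that every past reader of $slot$ has departed. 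Getting this chain of implications stated cleanly, rather than hand-waving about ``readers moving on,'' is where the care is needed.
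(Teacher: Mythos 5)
Your proposal is correct and follows essentially the same route as the paper's proof: readers can only bind to a slot through the index published in {\tt current}, the condition $register[slot].r\_start = register[slot].r\_end$ together with $slot \neq last\_slot$ certifies that the chosen slot is neither currently published nor held by any stale reader, and the TSO/RMW argument guarantees coherence of the slot at publication time. Your treatment of the stalled-reader case via the frozen {\tt r\_start} accounting is in fact spelled out more explicitly than in the paper's own proof, which compresses these points into its conditions (i)--(iv).
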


\begin{proof}
Read operations bound to the initial snapshot of the register trivially satisfy the claim, since that snapshot is not written by the writer. Let us therefore focus on reads of the register snapshots that are different from the initialization one.
By the {\sc Read}() operation pseudo-code,
a read operation is always bound to the slot index that is returned at some point in time by atomically executing {\tt AtomicAddAndFetch} on the {\tt current} variable. This is true also when subsequent reads by a reader process take an unchanged register content from a same slot, since the first of these reads must have necessarily executed the {\tt AtomicAddAndFetch} instruction on {\tt current} to retrieve the index of that slot. On the other hand, the {\tt r\_end} field of some slot is incremented by the reader only after moving to some new slot upon the read operations.

Given that (i) the writer selects a slot $x$ for writing only when it finds its {\tt r\_start} and {\tt r\_end} fields set to the same value, (ii) {\tt r\_start} is freezed into the slot $x$  only after it is no longer the current one, (iii) whichever slot $x$ becomes again readable after its index is published into the {\tt current} shared variable, (iv) TSO memory consistency guarantees that when the update of {\tt current} is performed by the writer, so as to point to the $x$-th slot, all the data associated with the slot have already been flushed to memory, we have that any read will always observe a stable snapshot of the register when reading from the generic $x$-th slot. Hence the claim follows.
\end{proof}

We now prove  regularity and atomicity of our register:

\begin{theorem}[Regular Register]
\label{thm:theo1}
Any read operation returns either the last written value, or one being concurrently written.
\end{theorem}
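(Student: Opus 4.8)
The plan is to show the \textbf{No-past} property holds for the ARC register, i.e., there exist no read $r$ and write $w$ with $\pi(r) \to w \to r$, which together with safety (inherited from the stronger regularity guarantee) gives the theorem. I would proceed by tracking the lifecycle of the {\tt current} variable, which is the single point of serialization. First I would fix the linearization point of each write $w$ at the instant its {\tt AtomicExchange} on {\tt current} (\textbf{W2}) takes effect, and the ``effective read point'' of each read $r$ at the instant its {\tt AtomicAddAndFetch} on {\tt current} (\textbf{R4}) takes effect --- with the caveat that a read satisfying the fast-path condition \textbf{R2} inherits the effective read point (and returned slot) of the most recent earlier read by the same process that actually executed \textbf{R4}. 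Since both instructions are RMW operations on the same 64-bit word, all these points are totally ordered in real time, and this order is consistent with the $\to$ precedence relation on operations.

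The core argument is then: the slot index returned by a read $r$ is exactly the {\sf index} field of {\tt current} as of $r$'s effective read point, hence it is the slot published by the most recent write preceding that point --- this write is precisely $\pi(r)$. So $\pi(r)$'s linearization point is the last write-update of {\tt current} at or before $r$'s effective read point. Now suppose for contradiction that $\pi(r) \to w \to r$ for some write $w$. Then $w$'s {\tt AtomicExchange} takes effect strictly after $\pi(r)$'s and strictly before $r$ starts, hence strictly before $r$'s effective read point. But then $\pi(r)$ is not the most recent write updating {\tt current} before $r$'s effective read point --- $w$ (or some even later write) is --- contradicting the identification of $\pi(r)$ above. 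For the fast-path case I would additionally note that if $r$ returns via \textbf{R2} with {\tt last\_index} still equal to the current {\sf index} field, then no write has intervened since the earlier \textbf{R4}-executing read, so the same reasoning applies with that earlier effective read point.

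I would also need the content-consistency half: that the value $r$ actually reads out of {\tt register[last\_index].content} is the value written by $\pi(r)$ (or the initialization value, when $\pi(r)$ is the notional ``initial write'' corresponding to {\tt I1}), and not garbage from a partially completed or overwritten slot. This is where Lemma~\ref{lemma:lemma1} is invoked directly: while a writer is mid-write on a slot, no reader is bound to that slot, and by TSO plus the RMW on {\tt current} the {\tt content} and {\tt size} fields are flushed before the publication; so a reader binding to slot $x$ at its effective read point sees the snapshot installed by the write that published $x$, stable until the next writer rebinds $x$ --- which by Lemma~\ref{lemma:lemma0}'s slot-availability discipline cannot happen while $r$'s presence count unit is still outstanding on $x$. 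If $r$ overlaps a sequence of writes $w_1, \dots, w_k$, its effective read point falls either before $w_1$ (returning the pre-$w_1$ value, which is ``the register value before the first of these writes'') or between two of them / after $w_k$ (returning some $w_i$'s value, one of the values being written), matching the regular-register definition.

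The main obstacle I anticipate is handling the fast-path reads (\textbf{R2}) cleanly: these do not execute any RMW instruction and only perform a plain memory read of {\tt current} at \textbf{R1}, so one must argue carefully that, under TSO, a plain read of a word that is only ever written by RMW instructions still returns a genuine value in the total order of updates (the excerpt already asserts this), and that {\tt last\_index} being equal to that value is exactly the certificate that $\pi(r)$ equals the $\pi$ of the previous real read. A secondary subtlety is the boundary treatment of the initialization: I would formally introduce a virtual write $w_0$ whose linearization point is \textbf{I1}, publishing slot $0$ with {\sf counter} $= N$, so that every read has a well-defined $\pi(r)$ and the uniformity of the argument is preserved.
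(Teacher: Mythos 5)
Your proof is correct and follows essentially the same route as the paper's: writes are linearized at the \textbf{W2} exchange on {\tt current}, reads at \textbf{R4} (or, for the \textbf{R2} fast path, at the point certified by \textbf{R1}), and regularity follows from the total order of RMW updates to {\tt current} together with Lemma~\ref{lemma:lemma1} for content stability. The paper's version is considerably terser; your explicit treatment of the fast-path inheritance, the virtual initial write, and the impossibility of slot reuse while a presence unit is outstanding fills in details the paper leaves implicit.
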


\begin{proof}
By the structure of Algorithm~\ref{alg:write}, the update of {\tt current} (\textbf{W2}) represents the atomic memory operation that defines the linearization point for any write.
If the write is linearized before the execution of statement \textbf{R1} by some reader, a read always returns the last written value, say the one posted by the last write serialized before the read, since the serialization point of the read is determined by \textbf{R4}, which targets the same shared synchronization variable {\tt current} whose atomic updates represent the serialization points of writes. Otherwise, if \textbf{R1} is executed before the update of the {\tt current} shared synchronization variable by the write operation, the read is correctly allowed to return the register value that was already stored before any concurrent write. Hence the claim follows.
\end{proof}

\begin{theorem}
\label{thm:theo2}
Given two read operations $r_1$ and $r_2$ such that $r_1 \rightarrow r_2$, $r_2$ never returns a value older than the one returned by $r_1$.
\end{theorem}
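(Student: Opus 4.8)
The plan is to attach to every read $r$ a linearization point $\ell(r)$ inside $[r_s,r_e]$ (with $r_s,r_e$ the start and end instants of $r$), to prove that the value $r$ returns is exactly the one produced by the last write linearized before $\ell(r)$, and then to conclude by comparing $\ell(r_1)$ with $\ell(r_2)$. Since a single writer executes its writes sequentially, the writes admit a total order given by the real-time order of their \textbf{W2} steps (the {\tt AtomicExchange} on {\tt current}); list them $w_1,w_2,\dots$ in this order, and let $w_0$ denote the value installed at \textbf{I1}, regarded as occupying slot $0$, so that $w_0\to w_1\to w_2\to\dots$ in the $\to$ order on write operations. I would take $\ell(r)$ to be the execution instant of \textbf{R4} when $r$ follows the general path (it executes \textbf{R3}--\textbf{R4}), and the execution instant of \textbf{R1} when $r$ returns at \textbf{R2}; both obviously lie in $[r_s,r_e]$. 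For a read $r$, write $w(r)$ for the last write whose \textbf{W2} precedes $\ell(r)$, with $w(r)=w_0$ if there is none.

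The crux --- and the step I expect to carry the weight of the proof --- is the claim that $\pi(r)=w(r)$ for every read $r$. First, the {\sf index} field of {\tt current} equals, at every instant, the slot published by the last write whose \textbf{W2} precedes that instant (it is modified only by writes, which never overlap), with slot $0$ playing that role before $w_1$. For a general-path read, \textbf{R4} reads {\tt current} atomically, so the slot $s$ it binds to is $w(r)$'s; moreover \textbf{R4} leaves on {\tt current} a presence-count unit that the process does not release (via \textbf{R3}) before some strictly later read, and the writer never re-selects the slot held in {\tt last\_slot} at \textbf{W1}, so no write can select $s$ while the process is bound to it. Hence $s$'s {\tt content} does not change between \textbf{R4} and the following content read and, by Lemma~\ref{lemma:lemma1} together with the TSO memory-consistency argument used there, it holds $w(r)$'s value --- so $\pi(r)=w(r)$.

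For a cached-path read $r$ by a process $P$ whose {\tt last\_index} equals a slot $s$: either $P$ set {\tt last\_index} to $s$ at the \textbf{R5} of an earlier read in which it had bound to $s$ via \textbf{R4}, or $r$ precedes $P$'s first general-path read, $s=0$, and $P$ still owns the virtual presence-count unit on slot $0$ set up at \textbf{I1}. Either way $P$ holds an unreleased presence-count unit on $s$ continuously up to $r$; the first write after $P$'s acquisition of that unit freezes it into {\tt register[s].r\_start} at \textbf{W3} (so {\tt r\_start} $>$ {\tt r\_end} thereafter), and before that write $s$ equals {\tt last\_slot}. Thus no write selects $s$ in that whole span, so $s$'s {\tt content} is immutable there and {\tt current}'s {\sf index}, once it has left $s$, can never return to it. But $r$ reaches \textbf{R2} precisely because {\tt current}'s {\sf index} is $s$ at \textbf{R1}, which is $\ell(r)$; hence no write has executed \textbf{W2} since $P$ acquired the unit, the write that published $s$ is still $w(r)$, and $r$ returns its unchanged value, i.e. $\pi(r)=w(r)$. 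This proves the claim. Finally, $r_1\to r_2$ gives $r_{1,e}\leq r_{2,s}$, hence $\ell(r_1)\leq r_{1,e}\leq r_{2,s}\leq\ell(r_2)$; the writes whose \textbf{W2} precedes $\ell(r_1)$ then form a subset of those whose \textbf{W2} precedes $\ell(r_2)$, and since the writes are totally ordered by \textbf{W2}-time, $w(r_1)=w(r_2)$ or $w(r_1)\to w(r_2)$. By the claim $\pi(r_1)=w(r_1)$ and $\pi(r_2)=w(r_2)$, so $\pi(r_2)$ is never older than $\pi(r_1)$; in particular $\pi(r_2)\to\pi(r_1)$ is impossible, which is exactly the No New-Old inversion condition, and together with Theorem~\ref{thm:theo1} it makes the register atomic.

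The part I expect to be the main obstacle is the cached-path case of the crux claim: pinning down the invariant that ties a process's local {\tt last\_index} to the {\tt r\_start}/{\tt r\_end} counters of its cached slot --- including the virtual presence counts created at \textbf{I1} --- and combining it with the writer's requirement that the slot it picks differ from {\tt last\_slot}, so as to guarantee at once that the cached slot's {\tt content} never changes under the reader and that {\tt current}'s {\sf index} cannot transiently leave and re-enter that slot. The memory-visibility part is then delivered by the same TSO memory-consistency reasoning already used for Lemma~\ref{lemma:lemma1}, and the remaining steps --- identifying the linearization points and performing the final real-time comparison --- are routine.
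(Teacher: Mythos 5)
Your proof is correct, and it is substantially more detailed than the one in the paper. The paper argues Theorem~\ref{thm:theo2} by contradiction in a few lines: since {\tt current} is the only place where the up-to-date slot index is published, for $r_2$ to return an older value than $r_1$ it would have to observe {\tt current} before $r_1$ does, contradicting $r_1 \rightarrow r_2$. That argument rests on the same monotonicity idea as yours --- the single writer makes the successive values of {\tt current} track the write order in real time --- but it leaves implicit exactly the points you work out: it does not assign explicit linearization points to reads, and it does not separately treat the fast path that returns at \textbf{R2} without ever touching {\tt current} with an RMW instruction. Your proof instead constructs the reading function outright, showing $\pi(r)=w(r)$ where $w(r)$ is the last write whose \textbf{W2} precedes $\ell(r)$, with $\ell(r)$ placed at \textbf{R4} or at \textbf{R1} depending on the path taken; the cached-path case is handled via the invariant that a reader holds an unreleased (possibly virtual, from \textbf{I1}) presence-count unit on its {\tt last\_index} slot, which jointly with the $slot \neq last\_slot$ condition at \textbf{W1} prevents both reuse of the slot and re-entry of {\tt current}'s {\sf index} into it. What your route buys is a self-contained argument that also re-derives regularity (Theorem~\ref{thm:theo1}) as a byproduct of $\pi(r)=w(r)$ and makes the linearizability of the whole register explicit; what the paper's route buys is brevity, at the cost of gliding over the \textbf{R2} path and the slot-reuse issue. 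One presentational nit: your final comparison uses $\ell(r_1)\leq r_{1,e}\leq r_{2,s}\leq\ell(r_2)$, where the middle inequality should be strict under the strict partial order $\to$; this changes nothing in the conclusion.
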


\begin{proof} (By contradiction)
By the proof of Theorem~\ref{thm:theo1}, a read executed before the linearization point of a write returns the old value (with respect to the execution of the write).
Let us assume by contradiction that, given two reads  $r_1$ and $r_2$ such that $r_1 \rightarrow r_2$, $r_2$ returns a value older than the one returned by $r_1$. Yet, the {\tt current} synchronization variable is updated whenever the index of the most up-to-date slot changes. Therefore, for $r_2$ to read a value older than $r_1$, it has to read {\tt current} before $r_1$. But this violates the precedence $r_1 \rightarrow r_2$. Hence the assumption is contradicted and the claim follows.
\end{proof}

Atomicity of our register algorithm trivially follows from Theorem~\ref{thm:theo1} and Theorem~\ref{thm:theo2} in combination.

\section{Experimental Results}
\label{sec:experimental}

\begin{figure*}[t]
\centering
\subfigure[4KB register size]{\includegraphics[width=0.6\linewidth]{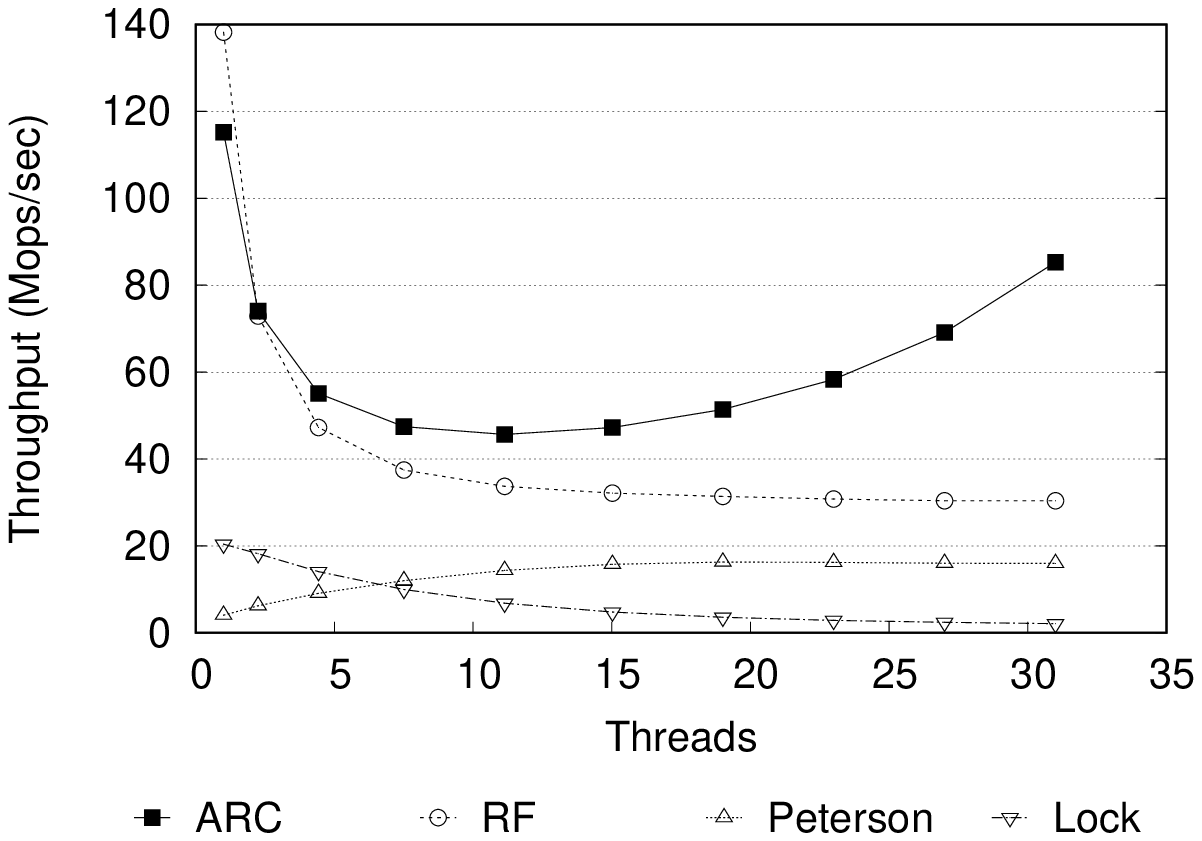}}

\subfigure[32KB register size]{\includegraphics[width=0.6\linewidth]{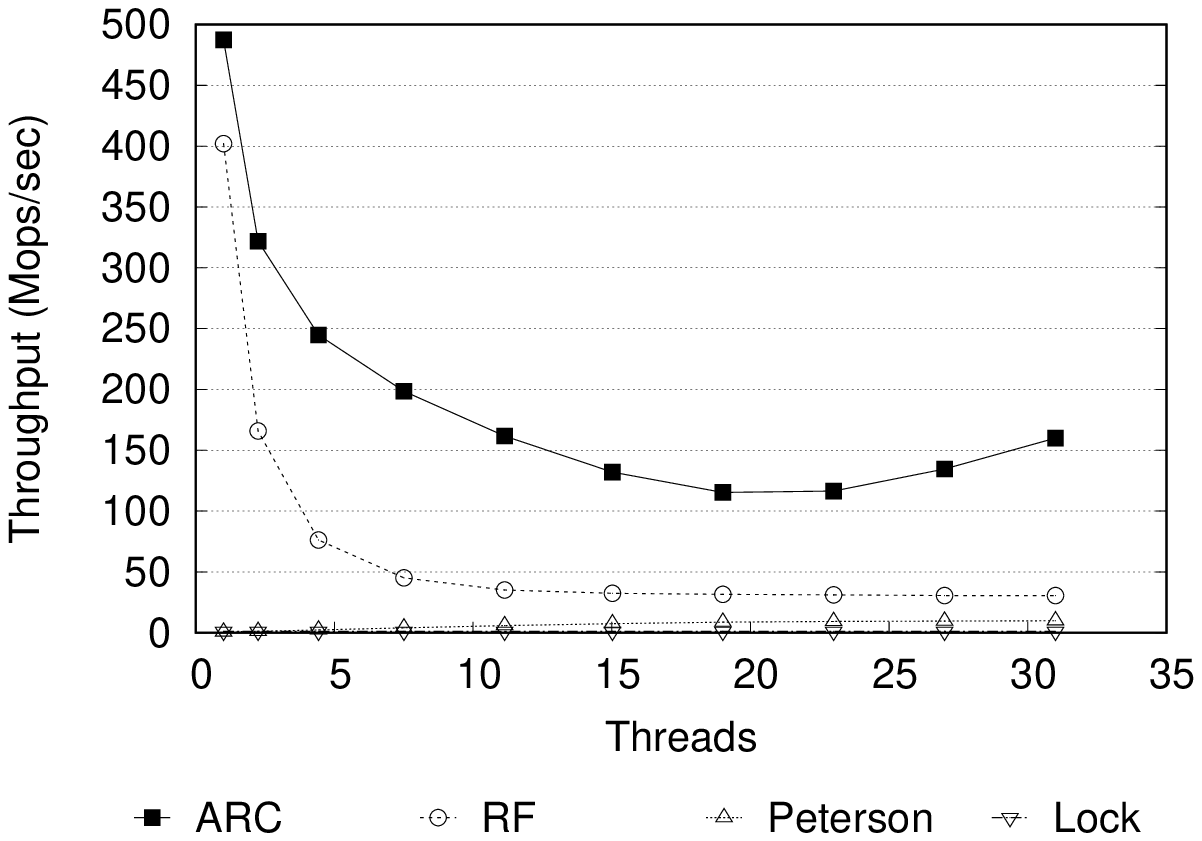}}

\subfigure[128KB register size]{\includegraphics[width=0.6\linewidth]{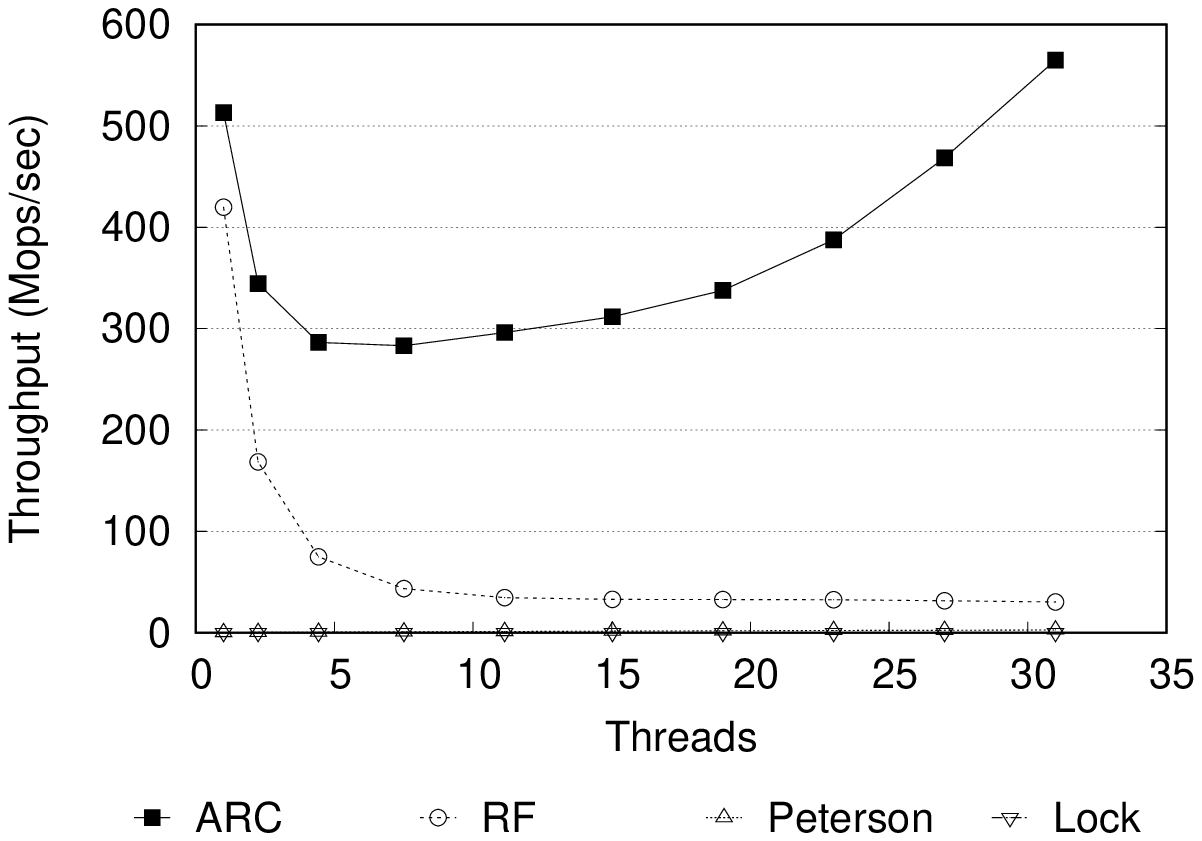}}
\caption{Throughput with different register size values (32 CPU-core physical machine).}
\label{bare-metal}
\end{figure*}

In this section we present a comparative performance study of our ARC algorithm with some literature proposal. In particular, we selected the Readers-Field (RF) wait-free algorithm presented in \cite{Lar09}, still based on RMW instructions offered by the underlying architecture, and Peterson's wait-free algorithm~\cite{Pet83}. For completeness of the analysis we also include a classical lock-based approach (using read/write spin-locks still implemented using RMW instructions) not ensuring wait-freedom. All these algorithms have been implemented according to their original specification by relying on the {\sf C} programming language and UNIX API, plus the nesting of either RMW machine instructions used to manipulate synchronization variables (like in ARC and RF) or memory-fence instructions to guarantee correctness under TSO (like for Peterson's algorithm). Also, in all the implementations we relied on {\tt mmap()} pre-allocation of all the buffers requested by each algorithm. The source code for all the tested implementations is available for free download%
\footnote{Code available at \url{https://github.com/HPDCS/ARC}.}.
In all the implementations, the ``process entity'', encapsulating the sequence of read or write operations accessing the register, is instantiated via an individual thread, scheduled for execution under the control of the operating system.

\begin{figure*}[t]

\centering
\subfigure[4KB register size]{\includegraphics[width=0.6\linewidth]{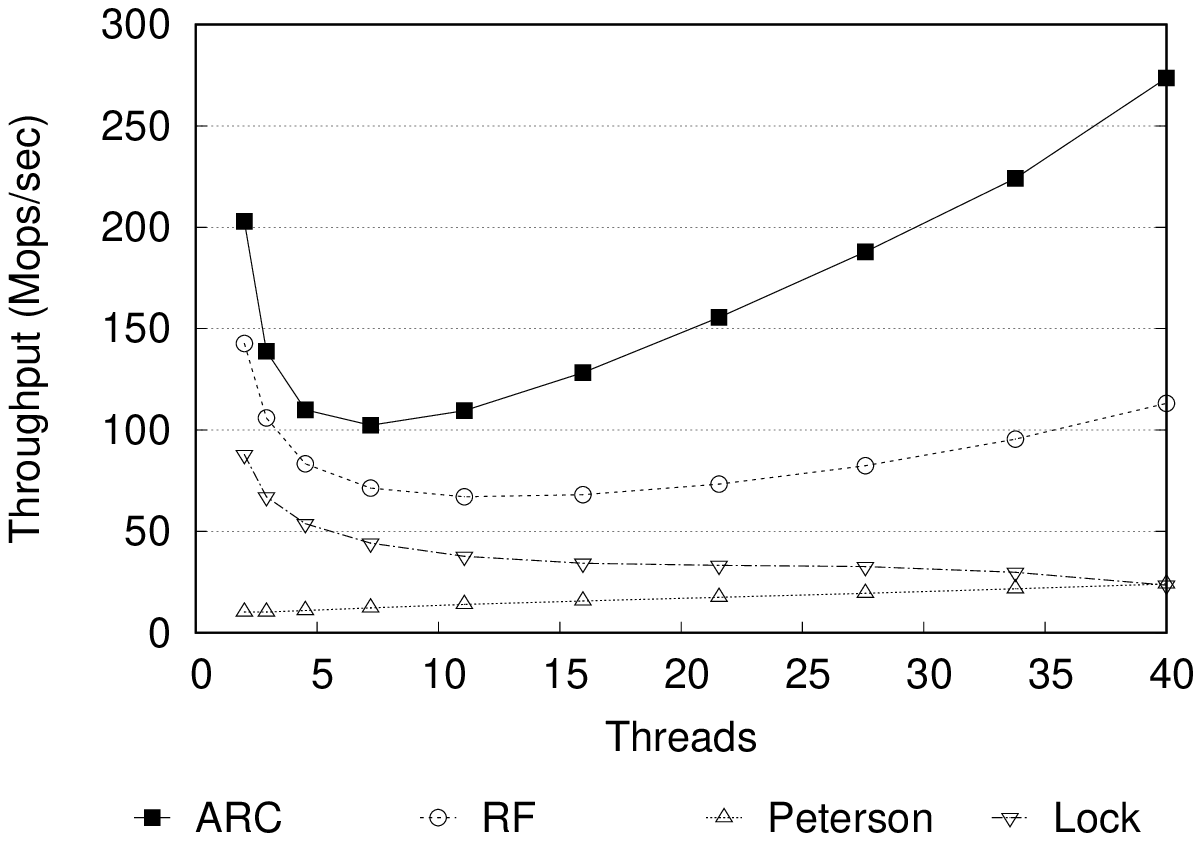}}

\subfigure[32KB register size]{\includegraphics[width=0.6\linewidth]{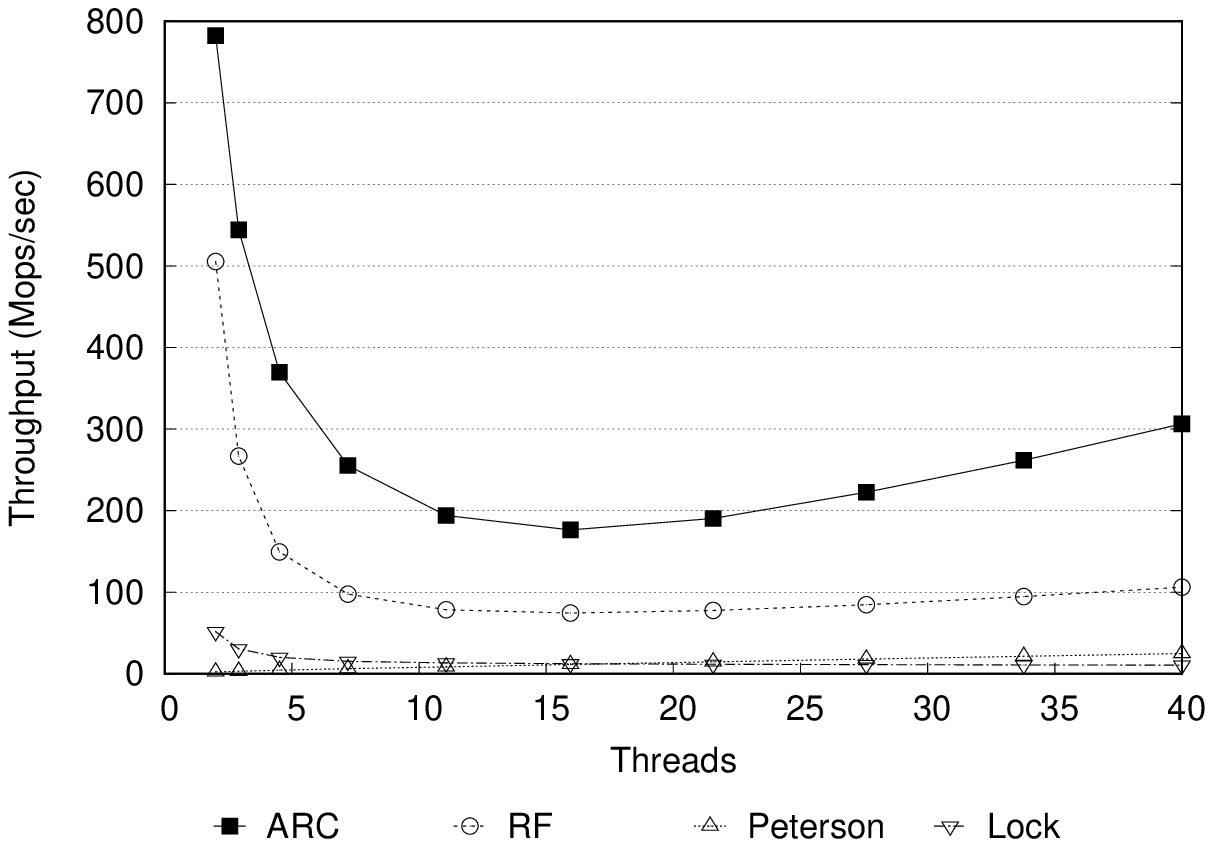}}

\subfigure[128KB register size]{\includegraphics[width=0.6\linewidth]{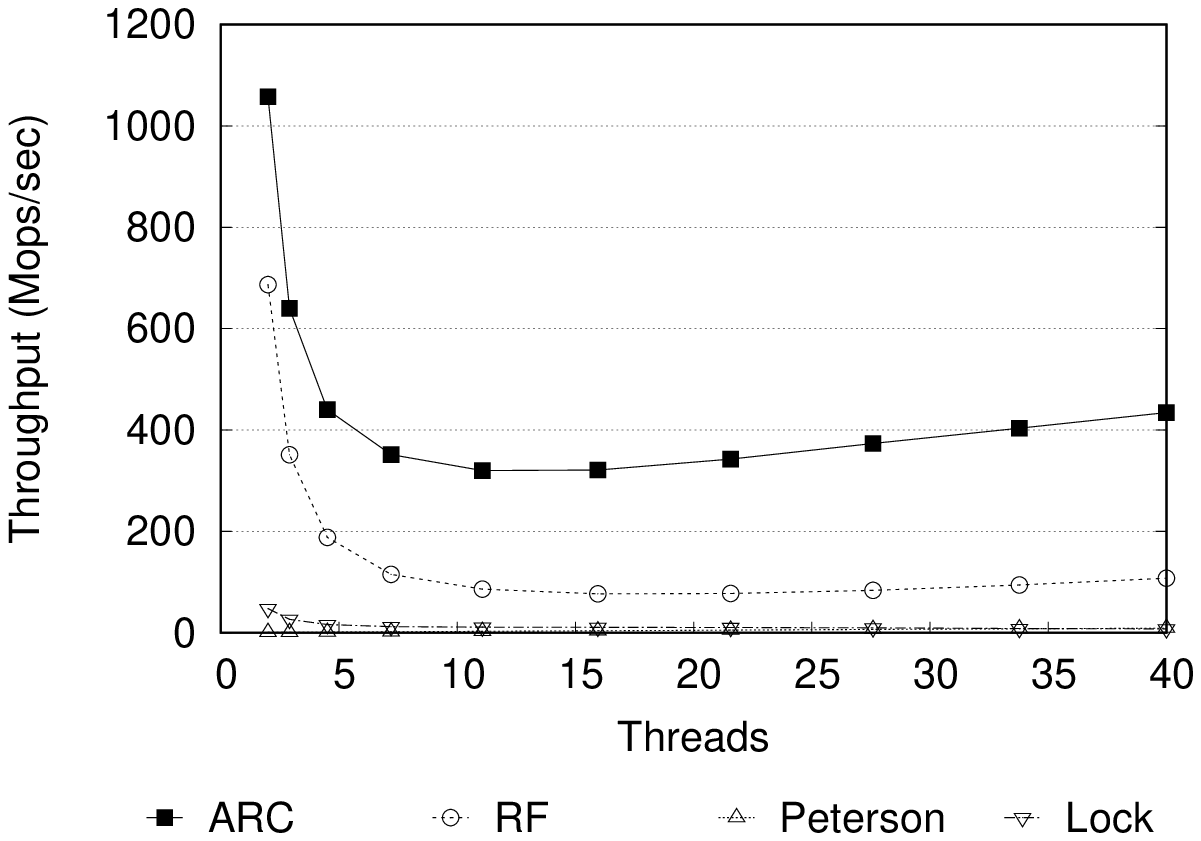}}
\caption{Throughput with different register size values (40 vCPUs machine).}
\label{virtual}
\end{figure*}

We tested the different algorithms deploying their implementations on two different computing platforms, a physical one and a virtualized one. The former is an HP ProLiant sever  equipped with four 2GHz AMD Opteron 6128 processors and 64 GB of RAM. Each processor has 8 cores, for a total of 32 CPU-cores, which share a 12MB L3 cache (6 MB per each 4-cores set), and each core has a 512KB private L2 cache. The operating system is 64-bit Debian 6, with Linux Kernel 2.6.32.5. The virtualized platform is an Amazon m4.10xlarge instance equipped with 2.4 GHz Intel Xeon E5-2676 v3 (Haswell) processors offering a total capacity of 40 vCPUs, equipped with 160 GB of RAM. This virtual machine runs Ubuntu Server 14.04 LTS as the operating system, with Linux Kernel 4.2.

We have conducted two different sets of experimental tests. In the first set, we have generated a workload on the register which is similar in spirit to the well-known {\em Hold Model}~\cite{Vau75}. In particular, all concurrent threads execute operations repeatedly on the register data structure. This means that read and write operations are actually ``dummy'' operations which only execute the ARC algorithms discussed in the previous section (or those of the competitors of ARC)---each write operation simply copies a same content to the register, and a read operation only retrieves the pointer to the valid register buffer. This is an extreme scenario in which data processing has zero latency, and threads make no other work than accessing the register data structure. The effect of this behaviour is that the logical contention on the register data structure and the physical one on the underlying hardware architecture is maximal. This part of the experimentation allows us to assess the benefits of ARC's optimized synchronization strategy. In a second part,
 we have associated read and write operations with actual processing---a write actually generates some data, and a read scans the whole content of the retrieved buffer. In this second scenario we can study the effect of different operations' latencies.

Before discussing performance results, we recall again that
ARC and RF not only differ by the different amounts of readers they can handle---58 in RF vs $2^{32}-2$ in ARC. Rather, they also differ by the way RMW instructions are exploited along the execution path of read/write operations accessing the register. This aspect makes a comparative analysis of these two algorithms
interesting independently of the huge scale up of the readers count admitted by our ARC proposal.

\begin{figure*}[t]

\centering
\subfigure[4KB register size]{\includegraphics[width=0.6\linewidth]{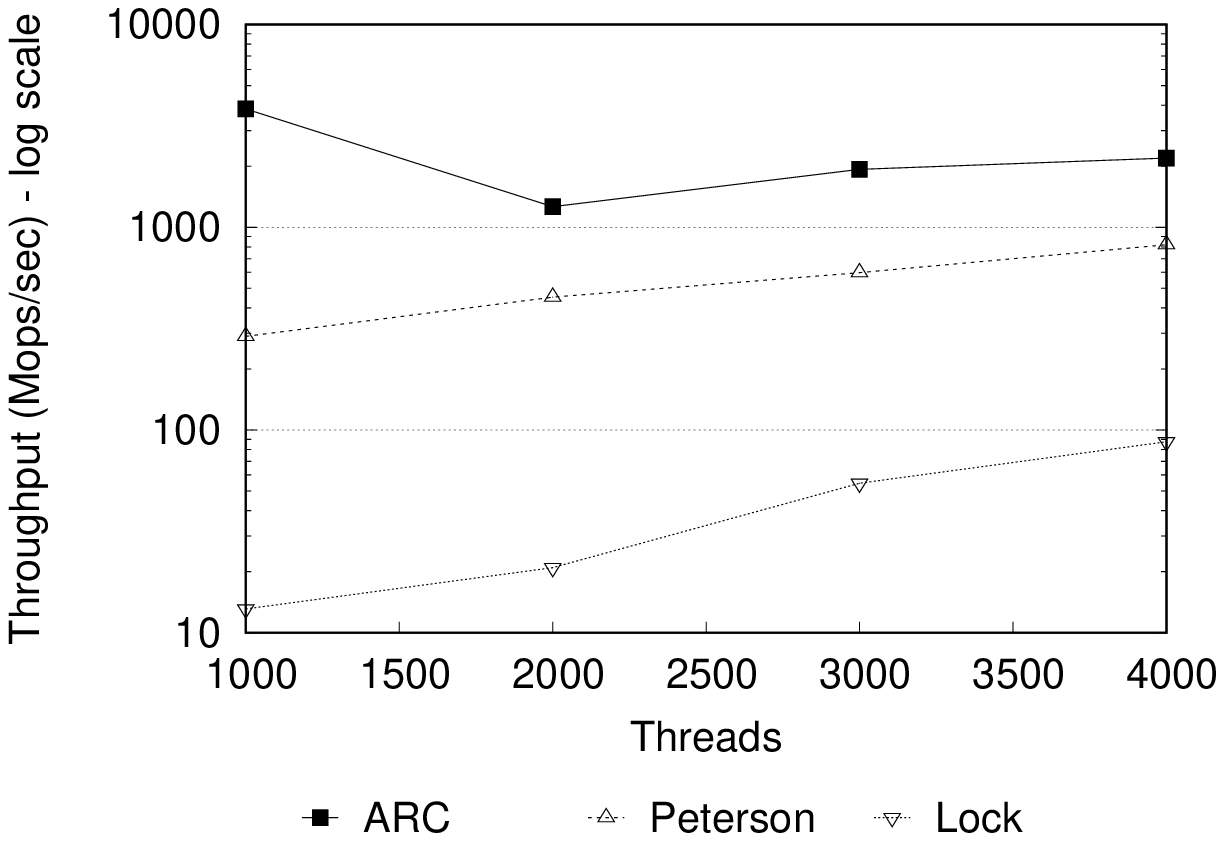}}

\subfigure[32KB register size]{\includegraphics[width=0.6\linewidth]{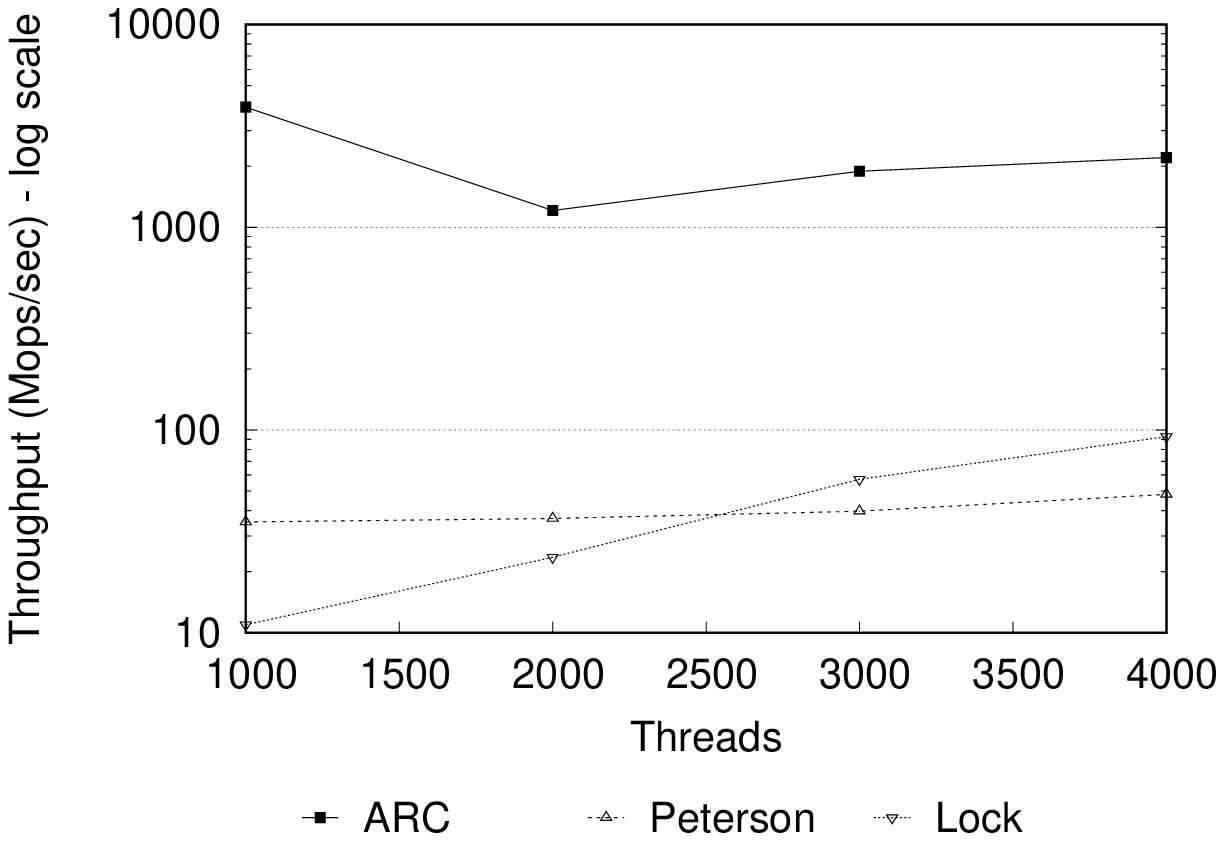}}

\subfigure[128KB register size]{\includegraphics[width=0.6\linewidth]{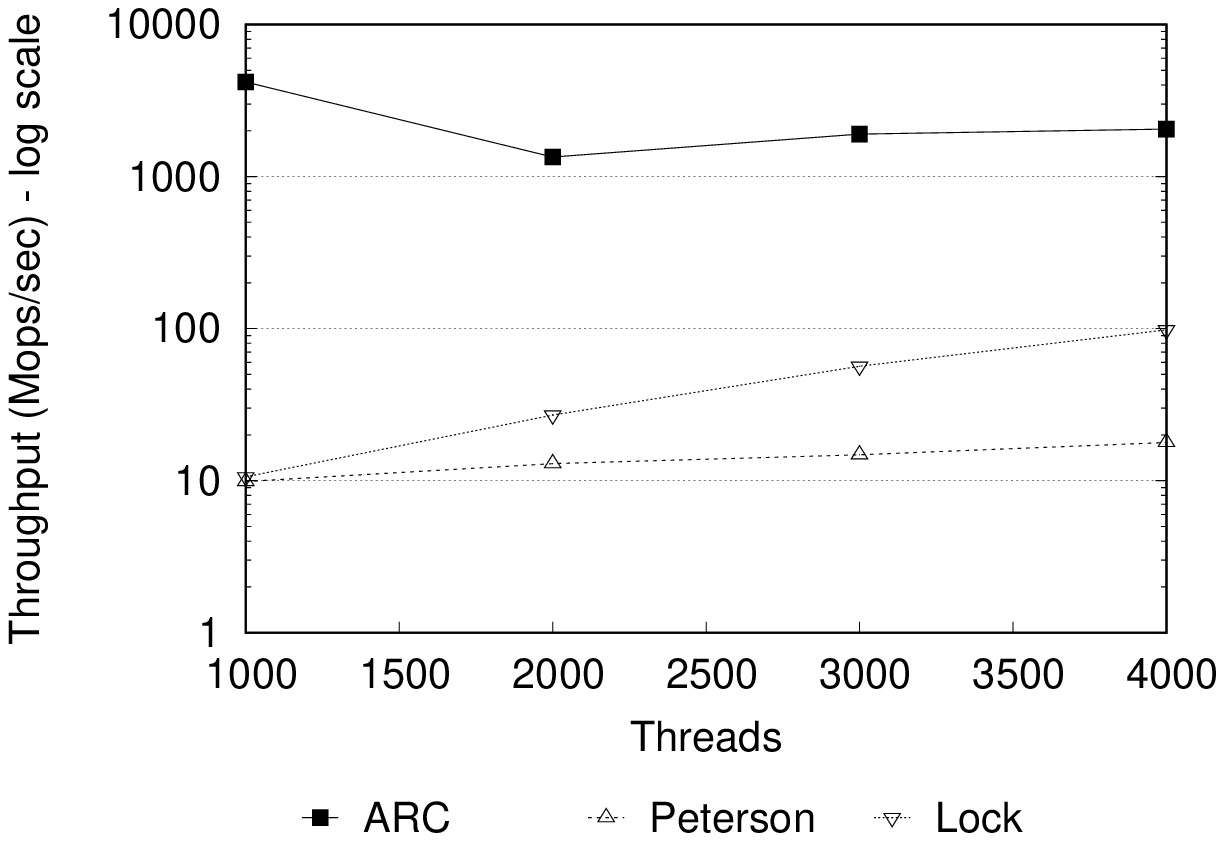}}
\caption{Throughput with largely-increased thread counts (32 CPU-core physical machine).}
\label{time-sharing}
\end{figure*}

In Figure~\ref{bare-metal} we report throughput values (read/write operations per time unit) while varying the number of threads for deploys of the different register implementations on the 32 CPU-core physical machine. In these tests, one thread continuously executes write operations on the register, while all the others continuously execute read operations on the register. Each reported sample is the average over 10 runs, with each run made up by at least $2 \times 10^6$ read/write operations.  No other workload has been activated on the machine, so we are in the scenario of maximal concurrency in the access to the register by the active threads, up to the maximum count of 32 threads that can be hosted by the machine while still avoiding time-sharing concurrency (hence interference) among them on a same CPU-core. Also, the different plots refer to 3 different sizes of the register, a minimal size of 4KB (a single operating system page), an intermediate size of 32KB, and a large size of 128KB. By the plots we see how both ARC and RF outperform the other solutions at any thread count. Also, ARC outperforms RF as soon as the thread count is increased beyond the value 8 or the register size is non-minimal, providing up to an order of magnitude better throughput. The reason for this behavior is that RF executes an RMW instruction (i.e. a {\tt FetchAndOr}) upon any read, while our proposal ARC executes a RMW instruction only if the write operation of a newer register value is serialized before the execution of the statement \textbf{R1} of the read operation in Algorithm \ref{alg:read}. Hence, ARC is more efficient (since it avoids the execution of RMW instructions) upon reading a register content that is still valid (i.e. it did not change since the last read operation executed by the same thread). This scenario shows up when increasing the level of concurrency of read operations or when the write operation takes longer time due to the larger size of the register content to be posted by the writer---we recall that a memory copy is executed upon a write. In both cases more threads will likely find a not-yet-updated register value upon subsequent reads, a scenario which is captured more efficiently by ARC, compared to RF, just avoiding the execution of RMW instructions.

In Figure~\ref{virtual} we report the throughput values that have been observed when running on top of the virtualized platform with 40 vCPUs. These data confirm what we already saw for executions on the physical machine, with the additional indication that ARC performs better than RF even with minimal thread counts and minimal register size, which indicates how the avoidance of the execution of RMW instructions upon read operations in scenarios where newer writes were not serialized before the reads allows to favor performance even more than what happens with deploys on the 32 CPU-core physical machine. Moreover, with respect to the execution on a physical machine, all wait-free algorithms provide a non-negligible performance speedup over the lock-based implementation. In particular, in the best case we have a performance gain which is 2x the one observed on the physical architecture. This is an indication of the benefits which can be obtained when using wait-free synchronization on virtualized architectures. Indeed, lock-based implementations can introduce an additional slow down whenever the virtualized architecture reduces the computing power allocated to the core holding the lock, due to CPU stealing by the underlying hypervisor's hardware architecture.

In Figure~\ref{time-sharing} we report throughput data when running on the 32 CPU-core physical machine with a definitely scaled up thread count (up to 4000). In this scenario, RF could not be tested (since, as said, it supports 58 reader threads only). However, this test settings helped us to assess the performance by ARC compared to Peterson's algorithm and to the lock-based one when considering time-sharing concurrency among the threads, hence interference among them because of competition on CPU usage. By that data we see that both ARC and the lock-based algorithm are not sensible to the increase of the threads count and to the increase of the register size, even though ARC provides orders of magnitude better throughput. This is not true for Peterson's algorithm since it is based on multiple copies when performing access operations. Such multiple copies are clearly adverse to performance in time-sharing concurrency deploys due to highly negative effects on locality and caching efficiency, especially for larger register size.

Overall, ARC delivers better performance than all the tested solutions independently of the type of deploy (physical vs virtual) and of the readers' count or register size. At the same time, it still allows a huge scale-up in the number of readers compared to RF, which appeared to be the best performing literature solution (compared to Peterson's algorithm and the lock-based one) for deploys on the used physical machine.

\section{Conclusions}

In this paper we have presented Anonymous Readers Counting, a multi-word wait-free atomic (1,N) register algorithm targeting shared-memory TSO-consistent parallel architectures. Our register enables up to $2^{32}-2$ readers on 64-bit machines and avoids any intermediate copy of the register content upon any operation, while still using the classical lower bound of $N+2$ buffers for ensuring wait-freedom. It exploits Read-Modify-Write (RMW) instructions commonly supported by off-the-shelf architectures, by also reducing the impact of actually running RMW instructions compared to the reference literature proposal in \cite{Lar09}, which also has the disadvantage of handling up to 58 readers only. The performance benefits from our proposal compared to literature approaches have been shown via a study based on deploys of the compared register implementations on both a parallel physical machine and a virtualized one. We have also provided a proof of correctness of our register algorithm.


\bibliographystyle{ieeetr}
\bibliography{library}

\

\end{document}